\newtheorem{theorem}{Theorem}
\newcounter{spslist}
\newcommand{\mat}[5]{ \renewcommand{\arraystretch}{#1}
                    \left[\! \begin{array}{cc}
                            #2 & #3 \\
                            #4 & #5 \end{array} \!\right] }
\newcounter{geqncount}
    {\refstepcounter{equation}%
     \setcounter{geqncount}{\value{equation}}%
     \setcounter{equation}{0}%
  }%
    {\setcounter{equation}{\value{geqncount}}}
\newcommand{\bi}{{\mathrm b}}
\newcommand{\ZZ}{\mathbb{Z}}
\newcommand{\RR}{\mathbb{R}}
\newcommand{\TT}{\mathbb{T}}
\newcommand{\CC}{\mathbb{C}}
\newcommand{\Hc}{\mathcal{H}}
\begin{document}

\bibliographystyle{plain}

\begin{center}
{\bf \Large  Stable defect states in the continuous spectrum \\ \vspace{0.5ex}  of bilayer graphene with magnetic field}
\end{center}

\vspace{0.2ex}

\begin{center}
{\scshape \large Stephen P\!. Shipman\footnote{shipman@lsu.edu, Department of Mathematics, LSU, Baton Rouge, LA 70808}
 \,and\, Jorge Villalobos\footnote{jvill38@lsu.edu, Department of Mathematics, LSU, Baton Rouge, LA 70808}
 } \\
\vspace{1ex}
{\itshape Department of Mathematics, Louisiana State University}
\end{center}

\vspace{3ex}
\centerline{\parbox{0.9\textwidth}{
{\bf Abstract.}\
In a tight-binding model of AA-stacked bilayer graphene, it is demonstrated that a bound defect state within the region of continuous spectrum can exist stably with respect to variations in the strength of a perpendicular magnetic field.  This is accomplished by creating a defect that is compatible with the interlayer coupling, thereby shielding the bound state from the effects of the continuous spectrum, which varies erratically in a pattern known as the Hofstadter butterfly.
}}

\vspace{3ex}
\noindent
\begin{mbox}
{\bf Key words:}  bilayer graphene, magnetic field, spectrum, embedded eigenvalue, stable defect state
\end{mbox}
\vspace{3ex}

\hrule
\vspace{1.1ex}

\section{Introduction} 

When a magnetic field is imposed perpendicular to a sheet of graphene or other 2D periodic electronic structure, the energy spectrum is known to vary erratically with the strength of the magnetic field~$\phi$.  In an ideal model, when $\phi\!=\!2\pi \alpha$ with $\alpha\!=\!p/q$ rational, there is a band-gap spectral pattern, which becomes increasingly complex as $q\to\infty$ and deteriorates to a Cantor set when $\alpha$ is irrational.  There is order in this erratic variation, depicted by the Hofstadter butterfly diagram of spectrum versus magnetic field in Fig.~\ref{fig:butterfly}.  This phenomenon was noticed by Azbel~\cite{Azbel1964} and Hofstadter~\cite{Hofstadter1976} and has interesting physical ramifications~\cite{BeckerHanJitomZworski2023}.  The bulk spectrum has been proven to be of continuous type---absolute when the magnetic field is rational and singular (a Cantor set) when irrational~\cite{BeckerHanJitomirska2019}.

We are interested in the consequence of this erratic spectral behavior for defect states.  As the magnetic field strength is varied, a fixed energy $E$ within the energy range of the butterfly passes in and out of the spectrum.  Suppose that a local defect in the graphene layer for some magnetic strength $\phi$ engenders a bound state at an energy $E$ in a spectral gap within the butterfly picture.  As $\phi$ is varied, this resonant energy may enter a spectral band where the bound state will dissolve as it couples to extended Bloch states (for rational $\alpha$), or else the localization will be poor when $E$ is near a complicated part of a Cantor spectrum.  Such a bound state is therefore unstable under variation of the magnetic field.

The aim of our work is to use two interacting sheets of graphene to shield a defect state from the effects of the continuous bulk spectrum so that it persists stably as the magnetic field is varied.
By stability, we mean analytic dependence $E\!=\!E_\phi$ of the energy of a bound state on the strength $\phi$ of the magnetic field, with the bound state decaying at an exponential rate that is uniform over~$\phi$.
We accomplish this theoretically for AA-stacked bilayer graphene (when the sheets are aligned), when the defect is compatible with the interlayer coupling.  Construction of embedded eigenvalues for nonmagnetic multi-layered structures is demonstrated in~\cite{Shipman2014}, and even for AB-stacked graphene (when the sheets are relatively shifted) in \cite{FisherLiShipman2021}.  At the conclusion, we discuss the difficulties posed by AB-stacking with a magnetic field.

\begin{figure}[h]
\centerline{
\scalebox{0.40}{\includegraphics{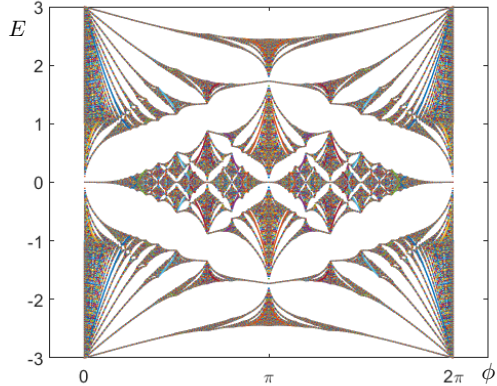}}
}
\caption{\small The ``Hofstadter butterfly" depicts the spectrum (vertical axis) of the magnetic tight-binding model for single-layer graphene as it depends on the magnetic field strength $\phi$ (horizontal axis).  When $\alpha=\phi/(2\pi)$ is irrational, it is a Cantor set, and when $\alpha=p/q$ it consists of $q$ bands.}
\label{fig:butterfly}
\end{figure}

In this study, we use a tight-binding model.  We first show how to obtain a bound state induced by a local defect in a single graphene sheet, with energy depending analytically on the magnetic field, as long as the energy remains sufficiently far from the butterfly region of spectrum.  We then couple two layers.  If the layers are aligned one directly over the other (AA-stacking), then the states of the bilayer graphene structure can be decomposed into two orthogonal spaces of hybrid states, each being an invariant space of the bilayer Hamiltonian.  Although the hybrid states in either of these invariant spaces are nonzero on both physical layers of graphene, their amplitudes on the two sheets are related, so each space of hybrid states acts as if it were the space of states of a single layer, but with shifted energy spectrum.  Each space of hybrid states contributes a Hofstadter butterfly to the $E$ {\it vs.} $\phi$ spectral picture for the bilayer graphene structure (Fig.~\ref{fig:Example}).

Then, we show that a local defect chosen to be compatible with the interlayer coupling will preserve the decoupling of hybrid states, so that scattering by the defect will not couple hybrid states from one invariant space with those of the other.  The compatibility is expressed as the commuting of the interlayer operator and the defect operator.  Then we apply the analysis for a single layer to obtain a stable bound state in one of the invariant spaces of hybrid states, lying outside its butterfly of continuous spectrum, but arranged so that it lies over the butterfly of the other hybrid space (Fig.~\ref{fig:Example}).  As the two hybrid spaces are invariant spaces of the Hamiltonian, and therefore dynamically decoupled, the bound states are oblivious to the presence of continuous spectrum varying erratically with the magnetic field.  The phenomena in this study apply to a Harper-type tight-binding model.  A careful discussion of tight-binding models with magnetic field and a rigorous comparison to continuum models can be found in~\cite{ShapiroWeinstein2022a}.

\section{Single-layer model} 

This section introduces the single-layer graphene model with magnetic field and shows how to construct defect states at energies in a spectral gap.  The results will be used in the following section to construct defect states for AA-stacked bilayer graphene that overlap with the continuous spectrum and are stable with respect to variation of the magnetic field.

In the simplest tight-binding model of single-layer graphene (Fig.~\ref{fig:defect}), there are two atoms per period cell and one orbital per atom, resulting in two degrees of freedom per period for the quantum mechanical state, or wave function, denoted by $u_1$ and~$u_2$.  When a perpendicular magnetic field with flux $\phi$ through each hexagon is imposed, the Hamiltonian $H_\phi$ is defined~by
\begin{equation}\label{H}
\begin{aligned}
    (H_\phi u)_1(n) &\;=\;u_2(n) + u_2(n+e_1) + e^{-in_1\phi}u_2(n+e_2) \\
    (H_\phi u)_2(n) &\;=\;u_1(n) + u_1(n-e_1) + e^{in_1\phi}u_1(n-e_2),
\end{aligned}
\end{equation}
in which $n=(n_1,n_2)$, $e_1=(1,0)$, and $e_2=(0,1)$.  
The phasors $e^{\pm in_1\phi}$ follow the Harper model of magnetism in tight binding~\cite{Harper1955}.  They also arise from a reduction of a quantum-graph model to a tight-binding one~\cite{BeckerHanJitomirska2019}.  We have chosen the Landau gauge for the magnetic potential.

Mathematically, $H_\phi$ is a self-adjoint operator on the Hilbert state space $\Hc = \ell^2(\ZZ^2,\CC^2)$, consisting of square-summable functions on the two-dimensional lattice $\ZZ^2$ with values in~$\CC^2$.  With reference to Fig.~\ref{fig:single}, let us describe it in more detail.  
Let $\mathbf{v}$ (red) and $\mathbf{w}$ (green) denote the two sites, or vertices, of a fixed period of the graphene structure.  These two sites are shifted by integer multiples of two vectors $\xi_1$ and~$\xi_2$ (Fig.~2) to create the hexagonal lattice.  
An element $u\in\Hc$ denotes a state as follows: For $n=(n_1,n_2)\in\ZZ^2$, $u(n)$ is a complex vector $( u_1(n), u_2(n) )$ in which $u_1(n)$ is the value of the state at the shifted red vertex at $\mathbf{v}+n_1\xi_1+n_2\xi_2$ and $u_2(n)$ is the value of the state at the shifted green vertex at $\mathbf{w}+n_1\xi_1+n_2\xi_2$.  The edges between vertices indicate nonzero matrix elements of the Hamiltonian operator $H_\phi$.

\begin{figure}[h]
\centerline{
\raisebox{8pt}{\scalebox{0.24}{\includegraphics{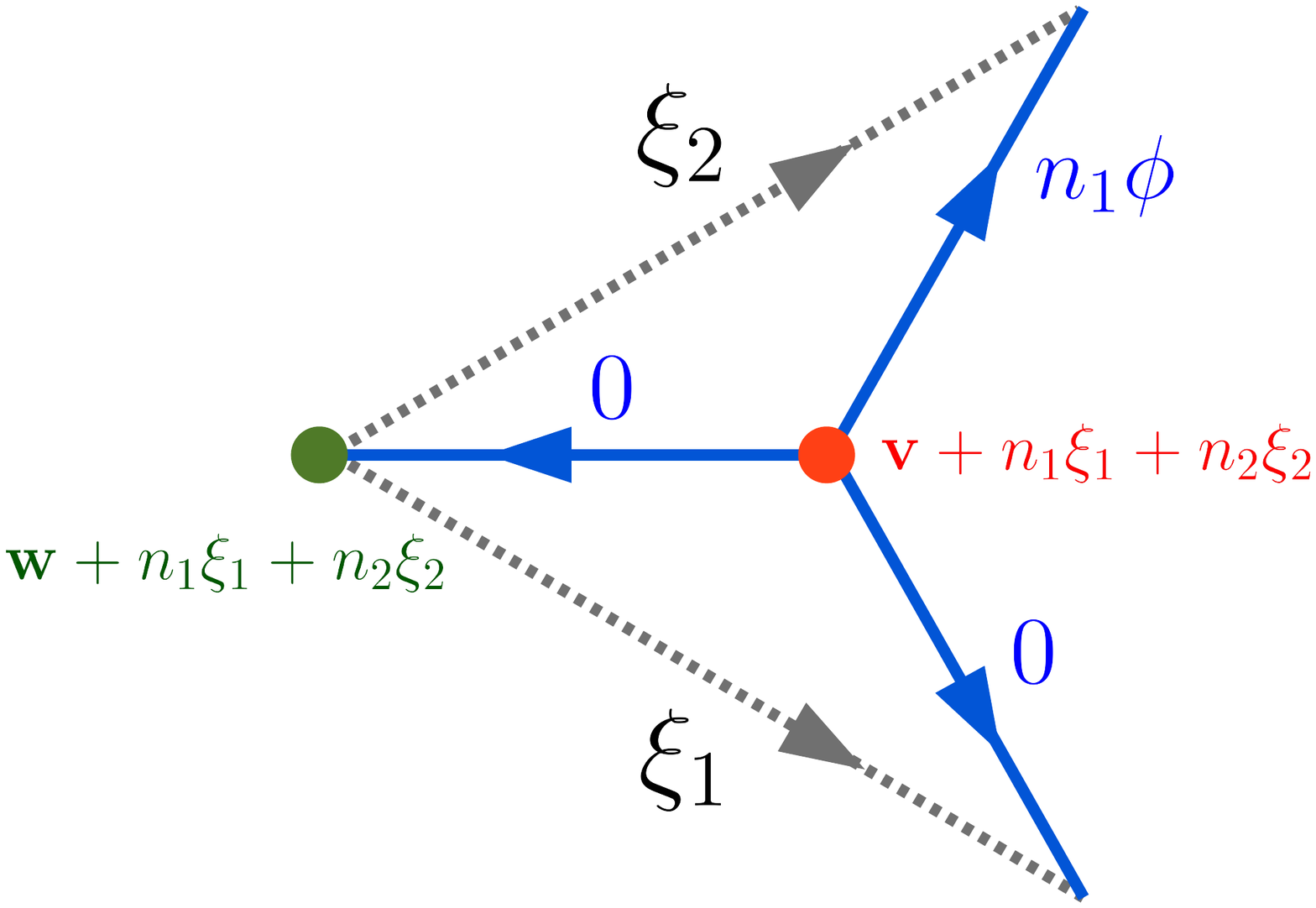}}}
\hspace*{2em}
\scalebox{0.4}{\includegraphics{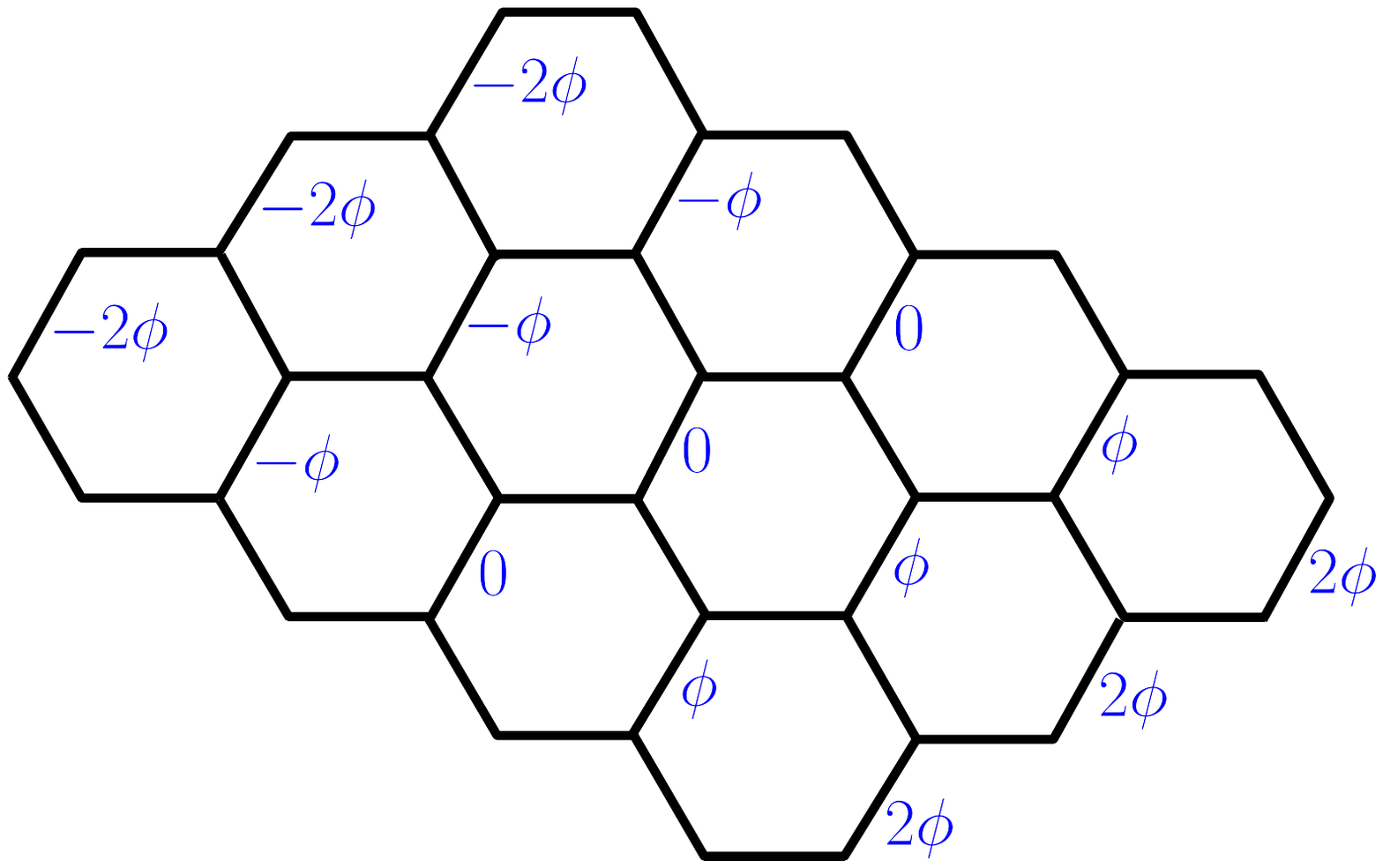}}
}
\caption{\small Left: The $n$-th shift of a fundamental domain of the graphene tight-binding model, where the red and green dots indicate the two orbitals associated with the two atoms in one period cell.  The elementary shifts by $e_1=(1,0)$ and $e_2=(0,1)$ are realized spatially as shifts by the vectors $\xi_1$ and $\xi_2$.  Right: A quasi-periodic magnetic potential on the edges that corresponds to a flux of $\phi$ through each hexagonal face, coming from a magnetic field directed perpendicular to the layer (Fig.~\ref{fig:defect}).}
\label{fig:single}
\end{figure}

\subsection{Preliminary analysis of the single layer}

The discrete Fourier transform converts the system into quasi-momentum space.  The components of the quasi-momentum $(k_1,k_2)$ are defined modulo~$2\pi$, and thus it is convenient to work with the ``Floquet multipliers" $(z_1,z_2)=(e^{ik_1},e^{ik_2})$ instead.  A state $u(n)$ is transformed to $\hat u(z)$ by 
\begin{equation}\label{Fourier}
  \hat u(z_1,z_2) \;=\; \sum_{n\in\ZZ^2} u(n_1,n_2)\, z_1^{-n_1}z_2^{-n_2},
  \qquad
  z=(z_1,z_2)\in\TT^2,
\end{equation}
in which $\TT^2=\{z=(z_1,z_2) : |z_1|=|z_2|=1 \}$ is the compact torus.
The Fourier transform is a unitary map from $\ell^2(\ZZ^2,\CC^2)$ to $L^2(\TT^2,\CC^2)$, with the normalized area (Haar measure) $dA(z)/(2\pi)^2=dk_1dk_2/(2\pi)^2$ on~$\TT^2$; and Fourier inversion gives
\begin{equation}
  u(n_1,n_2) \;=\; \frac{1}{4\pi^2} \int_{\TT^2} \hat u(z_1,z_2)\, z_1^{n_1}z_2^{n_2}\,dA(z).
\end{equation}

The Fourier transform converts $H_\phi$ into an operator $\hat H_\phi$ by the rule $\hat H_\phi \hat u=\left( H_\phi u \right)\hat{}$.
The utility of the Fourier transform is that $\hat u(z)$ is defined on the compact set~$\TT^2$, and additionally that shifts of $u(n)$ by $e_1$ and $e_2$ in the definition (\ref{H}) of $H_\phi$ are converted into multiplication of $\hat u(z)$ by $z_1$ and $z_2$ and that multiplication of $u(n)$ by $e^{in_1\phi}$ in the definition of $H_\phi$ is converted into a shift in momentum $k_1\mapsto k_1-\phi$, or $z_1\mapsto e^{-i\phi}z_1$, in the argument of~$\hat u(z)$.  Thus, $\hat H_\phi$ is given~by
\begin{equation}\label{Hhat}
\begin{aligned}
   (\hat H_\phi\hat u)_1(z_1,z_2) &\;=\; \hat u_2(z_1,z_2) + z_1\hat u_2(z_1,z_2) + z_2\hat u_2(e^{i\phi}z_1,z_2) \\
   (\hat H_\phi\hat u)_2(z_1,z_2) &\;=\; \hat u_1(z_1,z_2) + z_1^{-1}\hat u_1(z_1,z_2) + z_2^{-1}\hat u_1(e^{-i\phi}z_1,z_2).
\end{aligned}
\end{equation}
The operator $\hat H_\phi$ acts on $L^2(\TT^2,\CC^2)$, and it is unitarily equivalent to $H_\phi$ acting on~$\Hc$.
Using either (\ref{H}) or (\ref{Hhat}), one can see that $H_\phi$ is bounded,
\begin{equation*}
  \|H_\phi\| \leq 3.
\end{equation*}

The Fourier transform sum (\ref{Fourier}) extends analytically to a complex neighborhood of $\TT^2$ when $u(n)$ has exponential decay, according to the Paley-Wiener theorems.  Bounded analytic functions $\hat u(z)$ in the double $r$-annulus
\begin{equation*}
  A_r \;=\; \left\{ (z_1,z_2) : r^{-1} < |z_i|<r,\, i\in\{1,2\} \right\}
\end{equation*}
around the torus~$\TT^2$ correspond to functions $u(n)$ that decay exponentially as $|u(n_1,n_2)|<Cr^{-|n_1|-|n_2|}$.   For such functions, one has a uniform bound for $\hat H_\phi$.  More precisely, define the following norm on bounded analytic functions on~$A_r$:
\begin{equation*}
  \left\| \hat u \right\|_r \;=\; \max_{z\in A_r}\left| \hat u_1(z) \right| + \max_{z\in A_r}\left| \hat u_2(z) \right|.
\end{equation*}
Then, with respect to this norm, 
\begin{equation*}
  \big\| \hat H_\phi\hat u \big\|_r \,\leq\, M_r  \left\| \hat u \right\|_r
\end{equation*}
with $M_r\!=\! 1+2r $.

One can apply this bound to the equation $(H_\phi - E)u(n) \!=\! f(n)$ when $f(n)$ vanishes for sufficiently large~$n$, or, equivalently, when $\hat f(z)$ is a Laurent polynomial in~$(z_1,z_2)$, because in this case $\hat f(z)$ is bounded analytic in any annulus~$A_r$.  For energies $E$ such that $|E|\!>\!M_r$ the Fourier-transformed equation $(\hat H_\phi - E)\hat u(z) = \hat f(z)$ has a bounded analytic solution on~$A_r$ given by the Neumann series
\begin{equation*}
  \hat u(z) \;=\; -E^{-1}\sum_{\ell=0}^\infty \left[ (E^{-1}\hat H_\phi)^\ell\hat f \right]\!(z).
\end{equation*}
The response $u(n)$ is therefore exponentially decaying at least as~$Cr^{-|n|}$.
It is also analytic in the magnetic flux $\phi$ and energy~$E$ and can be expressed by Fourier inversion,
\begin{equation}
  u[\phi,E](n) \;=\; \frac{1}{4\pi^2} \int_{\TT^2} \hat u[\phi,E](z) z^n\,dA
\end{equation}
($z^n=z_1^{n_1}z_2^{n_2}$), and it is asymptotic to $-E^{-1}f(n)$ as $E\to\infty$.

The response $u[\phi,E](n)$ is actually analytic in $E$ in the resolvent set of~$H_\phi$, which, for each value of $\phi$, includes the set $\CC\setminus[-3,3]$ since~$\|H_\phi\|_\Hc\leq3$.  Its exponential decay degrades as $E$ approaches $[-3,3]$, as one must take $1+2r<|E|$ in the bound $|u[\phi,E](n)|<Cr^{-|n|}$.

\begin{figure}[h]
\centerline{
\scalebox{0.3}{\includegraphics{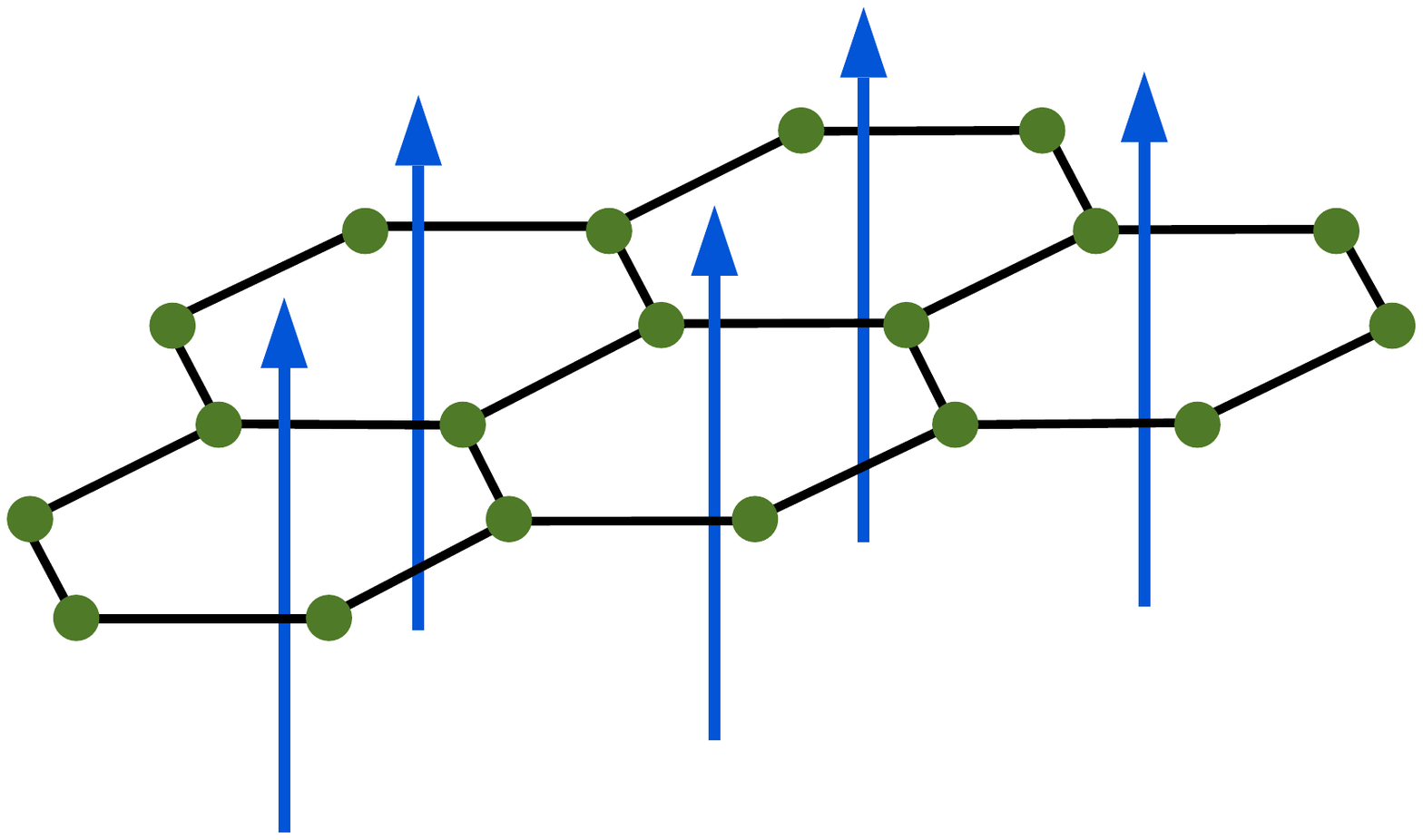}}
\hspace*{2em}
\scalebox{0.3}{\includegraphics{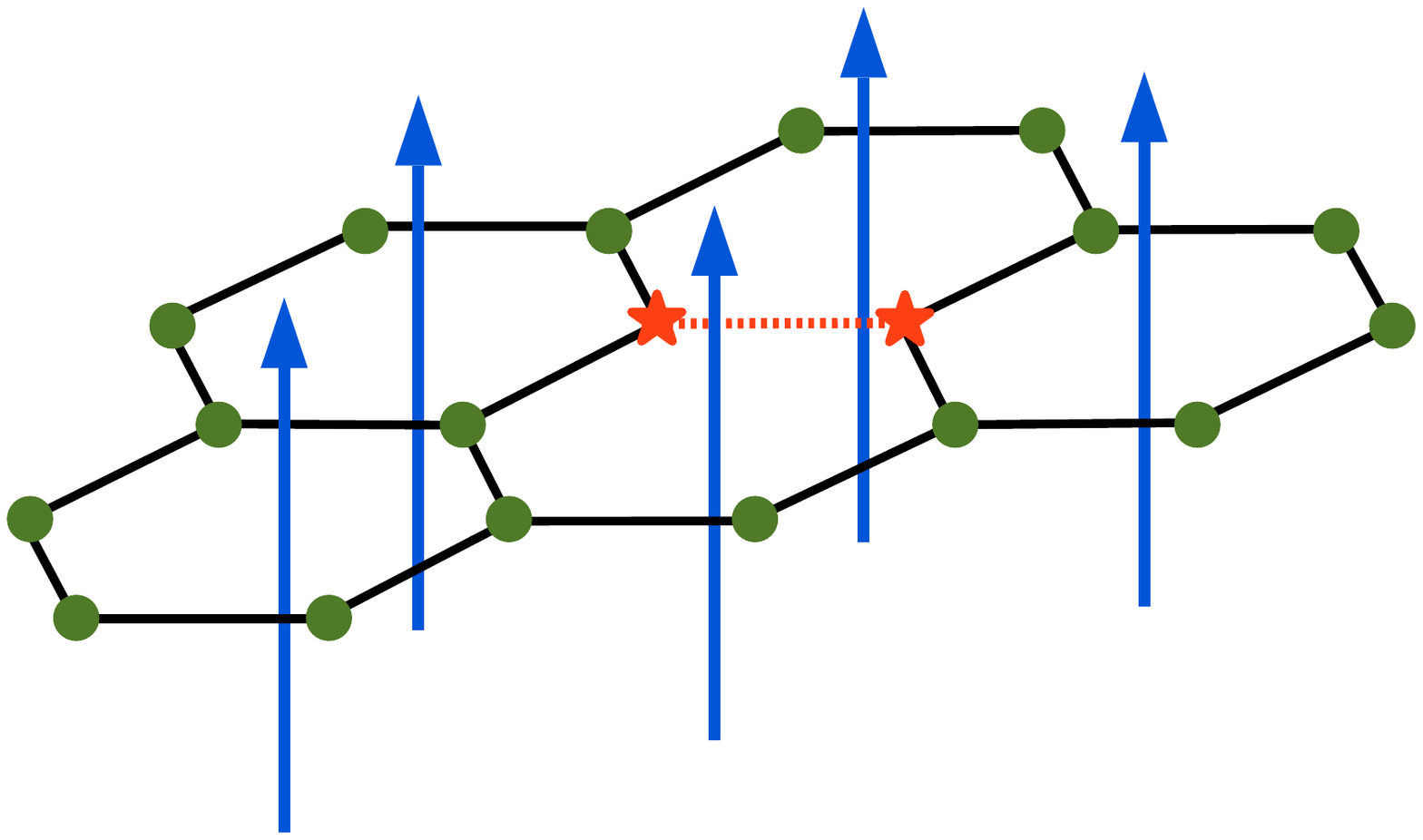}}
}
\caption{\small Left: A single graphene layer with a constant perpendicular magnetic field with flux $\phi$ through each hexagon.  Right:  A localized defect is created by adding a self-adjoint matrix to the two-by-two submatrix of the Hamiltonian corresponding to the interaction of two atoms.}
\label{fig:defect}
\end{figure}

\subsection{Defect states in a single layer}

A defect in the Hamiltonian $H_\phi$ is described by adding a local defect operator~$V$ to~$H_\phi$.  By a {\em local defect operator}, we mean a self-adjoint operator in $\Hc$ that satisfies $V\!=\!PVP$, where $P$ is the orthogonal projection onto the finite-dimensional space by~$\Hc_0$ consisting of states that vanish off of a finite set $S$ of sites of the hexagonal graphene structure.  We say that the defect operator {\em $V$ is localized at~$S$}.

\begin{theorem}\label{thm:one}
Let $\phi^0\in\RR$ and $E^0\in\RR$ such that $|E^0|>3$ be given, and let $v$ and $w$ be adjacent vertices of the hexagonal graph of the single-layer graphene model.  There exists a defect operator $V=PVP$ localized at the two-vertex set $\{v,w\}$; and real-analytic functions $\phi\mapsto E_\phi\in\RR$ and $\phi\mapsto u_\phi\in\Hc$ defined for $\phi$ in an open interval containing $\phi^0$, with $E_{\phi^0}\!=\!E^0$; and numbers $\gamma>0$ and $C>0$ such that
\begin{equation}\label{BIC2}
  (H_\phi + V)u_\phi \;=\; E_\phi\,u_\phi
  \qquad\text{and}\qquad
  |u_\phi(n)|<Ce^{-\gamma|n|}.
\end{equation}
\end{theorem}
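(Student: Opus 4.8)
The plan is to reduce the eigenvalue problem to a finite-dimensional one by the Birman--Schwinger principle, to build the defect so that $E^0$ is an eigenvalue at $\phi^0$, and then to continue analytically in $\phi$. Since $V=PVP$ has range in the two-dimensional space $\Hc_0$, write $V_0=V|_{\Hc_0}$ (a self-adjoint $2\times2$ matrix) and, for $E$ in the resolvent set of $H_\phi$, set $R(\phi,E)=P(H_\phi-E)^{-1}P$ regarded as an operator on $\Hc_0$. For $|E|>3$ the resolvent exists, and a short computation shows that $(H_\phi+V-E)u=0$ with $u\in\Hc$ is equivalent to $\psi:=Pu$ solving the finite system
\[
 (I + R(\phi,E)\,V_0)\,\psi \;=\; 0,
 \qquad
 u \;=\; -(H_\phi-E)^{-1}V_0\psi .
\]
The source $V_0\psi$ is supported on $\{v,w\}$, so by the preliminary analysis $u$ decays exponentially, and conversely any nonzero $\psi$ yields a genuine eigenfunction. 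This reduction is the right tool precisely because $\phi\mapsto H_\phi$ is \emph{not} norm-analytic (differentiating in $\phi$ produces the unbounded factor $n_1$), so Kato--Rellich theory does not apply directly; the preliminary analysis nonetheless shows that the resolvent response to a compactly supported source is real-analytic in $(\phi,E)$, hence so are the entries of the $2\times2$ matrix $R(\phi,E)$.

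I then construct $V$ at the base point. Take the source $g=\delta_v\in\Hc_0$ and let $u=(H_{\phi^0}-E^0)^{-1}g$, which decays exponentially; put $\psi_0:=Pu=R_0 g$ with $R_0:=R(\phi^0,E^0)$. The relation $(H_{\phi^0}+V-E^0)u=0$ holds exactly when $V_0\psi_0=-g$. Because $R_0$ is Hermitian (real $E^0$, self-adjoint $H_{\phi^0}$), the compatibility condition $\langle\psi_0,-g\rangle=-\langle R_0 g,g\rangle\in\RR$ for the existence of a Hermitian $V_0$ with $V_0\psi_0=-g$ is automatic, and such a $V_0$ exists with one free real parameter. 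Since the diagonal Green's function $\langle\delta_v,(H_{\phi^0}-E^0)^{-1}\delta_v\rangle=\int(\lambda-E^0)^{-1}\,d\mu_v(\lambda)$ is nonzero whenever $E^0\notin[-3,3]$ (the spectrum lies in $[-3,3]$), we have $\langle g,R_0 g\rangle\neq0$, hence $\psi_0\neq0$ and $V_0 g\neq0$. This gives $\det(I+R_0V_0)=0$, i.e. $E^0$ is an eigenvalue of $H_{\phi^0}+V$.

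Now I continue in $\phi$. Set $D(\phi,E)=\det(I+R(\phi,E)V_0)$; it is real-analytic in $(\phi,E)$ on $|E|>3$, and real-valued for real $E$ there because $\det(I+R V_0)=\det(I+V_0 R)=\overline{\det(I+RV_0)}$ by $\det(I+AB)=\det(I+BA)$ and Hermiticity. Since $D(\phi^0,E^0)=0$, the analytic implicit function theorem produces a real-analytic branch $E_\phi$ with $E_{\phi^0}=E^0$ once the transversality $\partial_E D(\phi^0,E^0)\neq0$ is verified; this is the main point. Writing $M=I+R V_0$, the zero of $\det M$ is simple provided the zero eigenvalue of $M$ is simple and nondefective, which I arrange by using the free parameter in $V_0$ to keep the second eigenvalue $\lambda_2$ away from $0$. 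An adjugate computation, using $\partial_E R=P(H-E)^{-2}P$ and the left null vector $V_0\psi_0=-g$, gives
\[
 \partial_E D(\phi^0,E^0)\;=\;-\,\lambda_2\,\frac{\langle g,(H_{\phi^0}-E^0)^{-2}g\rangle}{\langle g,R_0 g\rangle}
 \;=\;-\,\lambda_2\,\frac{\|u\|^2}{\langle g,R_0 g\rangle},
\]
which is nonzero because $\lambda_2\neq0$, $\|u\|^2>0$, and $\langle g,R_0 g\rangle\neq0$. The IFT then also yields a real-analytic unit null vector $\psi_\phi$ of $M(\phi,E_\phi)$, and $u_\phi:=-(H_\phi-E_\phi)^{-1}V_0\psi_\phi$ depends real-analytically on $\phi$.

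For the uniform exponential decay, $E_\phi$ stays near $E^0$, so for $\phi$ in a small interval there is a fixed $r>1$ with $|E_\phi|>1+2r=M_r$; the Neumann-series bound from the preliminary analysis then gives $|u_\phi(n)|<Ce^{-\gamma|n|}$ with $\gamma=\log r>0$ and $C$ independent of $\phi$. The hard part is the transversality $\partial_E D\neq0$, which I reduce to the positivity of the bound-state norm $\|u\|^2$; the remaining ingredients---the finite reduction, the automatic Hermitian compatibility, and the uniform resolvent estimate---are already in hand.
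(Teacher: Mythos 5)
Your argument is correct and, although it starts from the same seed as the paper's proof --- solve $(H_{\phi^0}-E^0)u=\delta_v$ and choose a Hermitian $V$ on $\mathrm{span}\{\delta_v,\delta_w\}$ sending $(u(v),u(w))$ to $-(1,0)$ --- it diverges at the two points where the real work happens, and at both your route is arguably cleaner. First, nondegeneracy at the base point: the paper proves $(u(v),u(w))\neq(0,0)$ by a rotation-plus-gauge-transformation symmetry argument showing $|u(w)|=|u(w')|$ for the neighbors $w'$ of $v$, whereas you observe directly that $u(v)=\langle\delta_v,(H_{\phi^0}-E^0)^{-1}\delta_v\rangle$ is the integral of $(\lambda-E^0)^{-1}$ against the spectral measure of $\delta_v$, hence of fixed sign because $\sigma(H_{\phi^0})\subset[-3,3]$ and $|E^0|>3$. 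This is shorter and strictly stronger: it shows the defect can even be supported at the single vertex $v$, which the paper only conjectures to hold ``generically'' and verifies at $\phi=0$. Second, continuation in $\phi$: the paper invokes analytic perturbation theory for the isolated eigenvalue of the family $H_\phi+V$, while you run the Birman--Schwinger reduction and apply the implicit function theorem to $D(\phi,E)=\det(I+R(\phi,E)V_0)=0$. You are right that this is not merely stylistic: $\phi\mapsto H_\phi$ is not norm-continuous ($\|H_\phi-H_{\phi'}\|$ does not tend to zero as $\phi'\to\phi$, because of the factor $e^{in_1\phi}$ with $n_1$ unbounded), so Kato--Rellich does not apply off the shelf; the object that genuinely is jointly analytic is the resolvent applied to compactly supported data, which is exactly what your $2\times2$ matrix $R(\phi,E)$ records and what the paper's preliminary Neumann-series analysis supplies. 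The cost of your route is the transversality $\partial_E D\neq0$, which the paper's abstract argument would get for free; your adjugate computation is right (the left null vector of $I+R_0V_0$ is indeed $g$, since $V_0R_0g=V_0\psi_0=-g$), but you should add the one line justifying that the free parameter in $V_0$ can force $\lambda_2=\mathrm{tr}(I+R_0V_0)\neq0$: perturbing $V_0$ by $s\,\eta\eta^{*}$ with $0\neq\eta\perp\psi_0$ preserves $V_0\psi_0=-g$ and changes $\mathrm{tr}(R_0V_0)$ by $s\langle\eta,R_0\eta\rangle$, which is nonzero because $(H_{\phi^0}-E^0)^{-1}$ is definite for $|E^0|>3$. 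With that line added, the proof is complete.
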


\begin{proof}
Let $\phi^0\in\RR$ and $E^0\in\RR$ such that $|E^0|>3$ be given, and let $v$ and $w$ be adjacent vertices of the hexagonal graph of the single-layer graphene model.  Let $f$ be a function of the vertices that equals $1$ on $v$ and vanishes otherwise.  Since $E^0$ is in the resolvent set of $H_\phi$, there exists a unique $u\in\Hc$ such that
\begin{equation}\label{HEuf}
  (H_\phi - E^0)u = f.
\end{equation}

Let $w'$ be either of the two other vertices adjacent to~$v$.  We will show that $|u(w)|=|u(w')|$.  Let $R$ denote the rotation of the hexagonal graph about the vertex $v$ that rotates $w$ into~$w'$, and let $R$ also denote the induced action on $\Hc$, that is $Ru(x)=u(R^{-1}x)$ for all vertices~$x$.  The hexagonal graph associated with the operator $RH_\phi R^{-1}$ is a rotation of that of $H_\phi$---all of the coefficients still have unit modulus, but the phases change; particularly, it has magnetic flux $\phi$ through all hexagons.  This means that the rotation merely effects a change of magnetic potential, and thus there is a gauge transformation $G$ such that $RH_\phi R^{-1}=GH_\phi G^{-1}$; see \cite{LiebLoss1993a}.  The gauge transformation is a unitary multiplication operator on $\Hc$, meaning that, for each state $f$ and vertex $x$, $Gf(x)=e^{i\psi_x}f(x)$ for some real number~$\psi_x$.  We obtain $\Upsilon(H_\phi-E^0)\Upsilon^{-1}=H_\phi-E^0$ with $\Upsilon=G^{-1}R$.  This, together with (\ref{HEuf}) implies $(H_\phi-E^0)\Upsilon u=\Upsilon f=e^{-i\psi_v}f$.  The uniqueness of the solution to (\ref{HEuf}) yields $\Upsilon u=e^{-i\psi_v}u$, and therefore
\begin{equation}
  Ru \;=\; e^{-i\psi_v}Gu.
\end{equation}
This provides the relation
\begin{equation}
  u(w) = u(R^{-1}w') = Ru(w') = e^{-i\psi_v}Gu(w') = e^{i(\psi_{w'}-\psi_v)}u(w'),
\end{equation}
and thus $|u(w)|=|u(w')|$, as desired.

By the definition (\ref{H}) of~$H_\phi$, since $f(v)\!\not=\!0$, $u$ cannot vanish simultaneously at $v$ and all of the three vertices adjacent to~$v$.  Thus either $u(v)\!\not=\!0$ or $u(w')\!\not=\!0$ for some vertex $w'$ adjacent to~$v$.  In the latter case, $|u(w)|=|u(w')|\not=0$.  In either case, the vector $[ u(v), u(w) ]^t$ is nonzero and $[ f(v), f(w) ]^t=[ 1,0 ]^t$.  Let $P$ be the orthogonal projection to the two-dimensional space $\Hc_0$ of states that are supported on the two vertices $v$ and~$w$, and let $V=PVP$ be a local defect operator such that its action in $\Hc_0$ is represented by a Hermitian matrix $\tilde V$ such that $-\tilde V[ u(v), u(w) ]^t=[ f(v), f(w) ]^t$.  This is equivalent to $-Vu=f$, and we obtain
\begin{equation}
  (H_\phi +V - E^0)u = 0.
\end{equation}
Since $V$ is a compact perturbation of $H_\phi$ and $E^0$ is in the resolvent set of $H_\phi$, $E^0$ is an isolated eigenvalue of the self-adjoint operator $H_\phi+V$ in $\Hc$.  Analytic perturbation theory guarantees, for $\phi$ in an interval about~$\phi^0$, a local analytic function $E_\phi$ with $E_{\phi^0}=E^0$ and states $u_\phi\in\Hc$ depending analytically on~$\phi$, such that $(H_\phi + V)u_\phi \!=\! E_\phi\,u_\phi$.  Since $|E^0|>3$, one can restrict $\phi$ to an interval about $\phi^0$ such that $|E_\phi|\!>1+2r\!>\!3$ for some $r>1$.  As discussed above, one obtains the exponential decay with~$r=e^\gamma$.
\end{proof}

We believe that, generically, the defect can be localized at one vertex only, as all that is required for this is that, when a nonzero forcing $f$ is applied at one vertex $v$, the response $u$ is nonzero at~$v$.  One can prove this when $\phi=0$, as $\hat H_0$ is a multiplication operator and the response can be computed explicitly.

\section{Bilayer structures and spectrally embedded bound states} 

Consider now two identical layers in AA-stacking, that is, one placed one directly above the other as shown in Fig.~\ref{fig:Double}.  The state space is now $\Hc\oplus\Hc$, and a general state $(u_1,u_2)$ consists of a component $u_1$ residing on one layer and a component $u_2$ on the other layer.  When representing operators on this state space, it is convenient to write it isomorphically as $\CC^2\otimes\Hc$.  In this notation, a state $(u_1,u_2)$ can be written as $(1,0)\otimes u_1+(0,1)\otimes u_2$, or, for example, as $(1,1)\otimes v_1+(1,-1)\otimes v_2$, where $v_1=(u_1+u_2)/2$ and $v_2=(u_1-u_2)/2$ are the even and odd components of $(u_1,u_2)$.  Block-form operators on $\Hc\oplus\Hc$ can be written in terms of tensor products of operators on $\CC^2\otimes\Hc$.  If $K=(\kappa_{ij})_{i,j\in\{1,2\}}$ is a $2\times2$ matrix and $L$ is an operator on $\Hc$, then $K\otimes L$ written in block form is
\begin{equation}
  K\otimes L \;=\; \mat{1.1}{\kappa_{11}L}{\kappa_{12}L}{\kappa_{21}L}{\kappa_{22}L},
\end{equation}
and its action is given by $(K\otimes L)(\xi\otimes u) = K\xi\otimes Lu$.
The matrix $K$ is viewed as an interlayer operator, and the operator $L$ as an intralayer operator.

A bi-layer AA-stacked graphene Hamiltonian with perpendicular magnetic field strength $\phi$~is
\begin{equation}\label{Hbi}
  H^\bi_\phi \;=\; I\otimes H_\phi \,+\, K\otimes I. 
\end{equation}
The defective Hamiltonian is $H^\bi_\phi + D$, where the defect operator has the form
\begin{equation}\label{D}
   D \;=\; M\otimes V,
\end{equation}
with $V$ being an intralayer local defect operator on~$\Hc$, as described in the previous section.
In the first term of (\ref{Hbi}), $I$ is the $2\times2$ identity matrix, so this term realizes two uncoupled copies of the single-layer Hamiltonian.  The second term represents spatially uniform coupling of the two layers: $I$ is the identity operator in $\Hc$, and the self-adjoint matrix $K$ represents the interlayer coupling strengths.  One could consider a more general coupling operator $K\!\otimes\!L$, where $L$ is a periodic self-adjoint operator in $\Hc$, but we will take $L\!=\!I$, which corresponds to the simplest AA-stacked model, where each site of one layer interacts only with the corresponding site of the other layer as depicted by vertical connecting edges in Fig.~\ref{fig:Double}.   The off-diagonal entries of $K$ represent reciprocal orbital coupling, and the difference between the diagonal entries achieves a bias through the potential of a perpendicular electric field (this is called ``gating").  The defect term of (\ref{D}) has self-adjoint interlayer coupling matrix~$M$ and is localized by~$V$.

\begin{figure}[h]
\centerline{
\scalebox{0.27}{\includegraphics{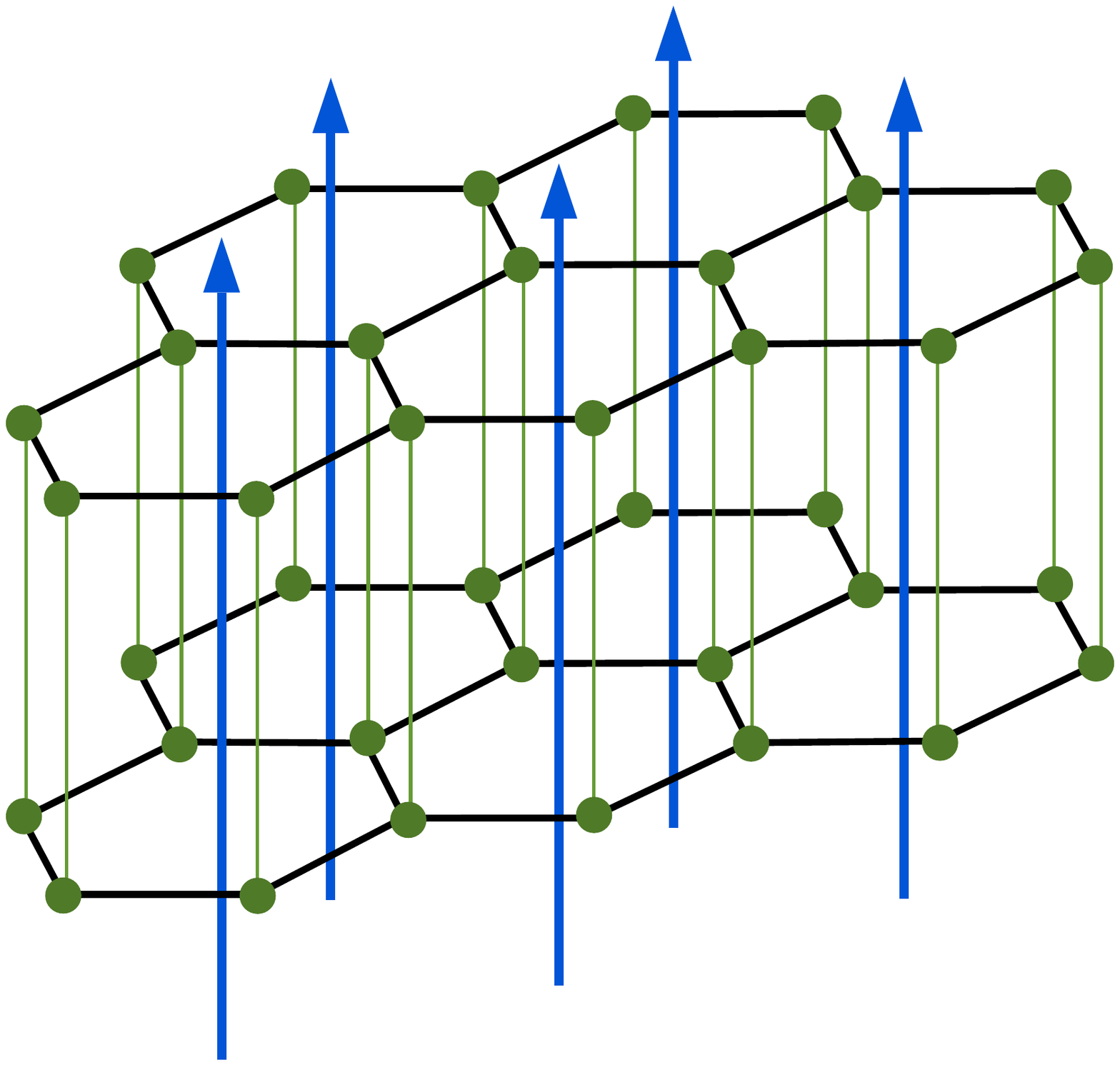}}
\hspace*{2em}
\scalebox{0.27}{\includegraphics{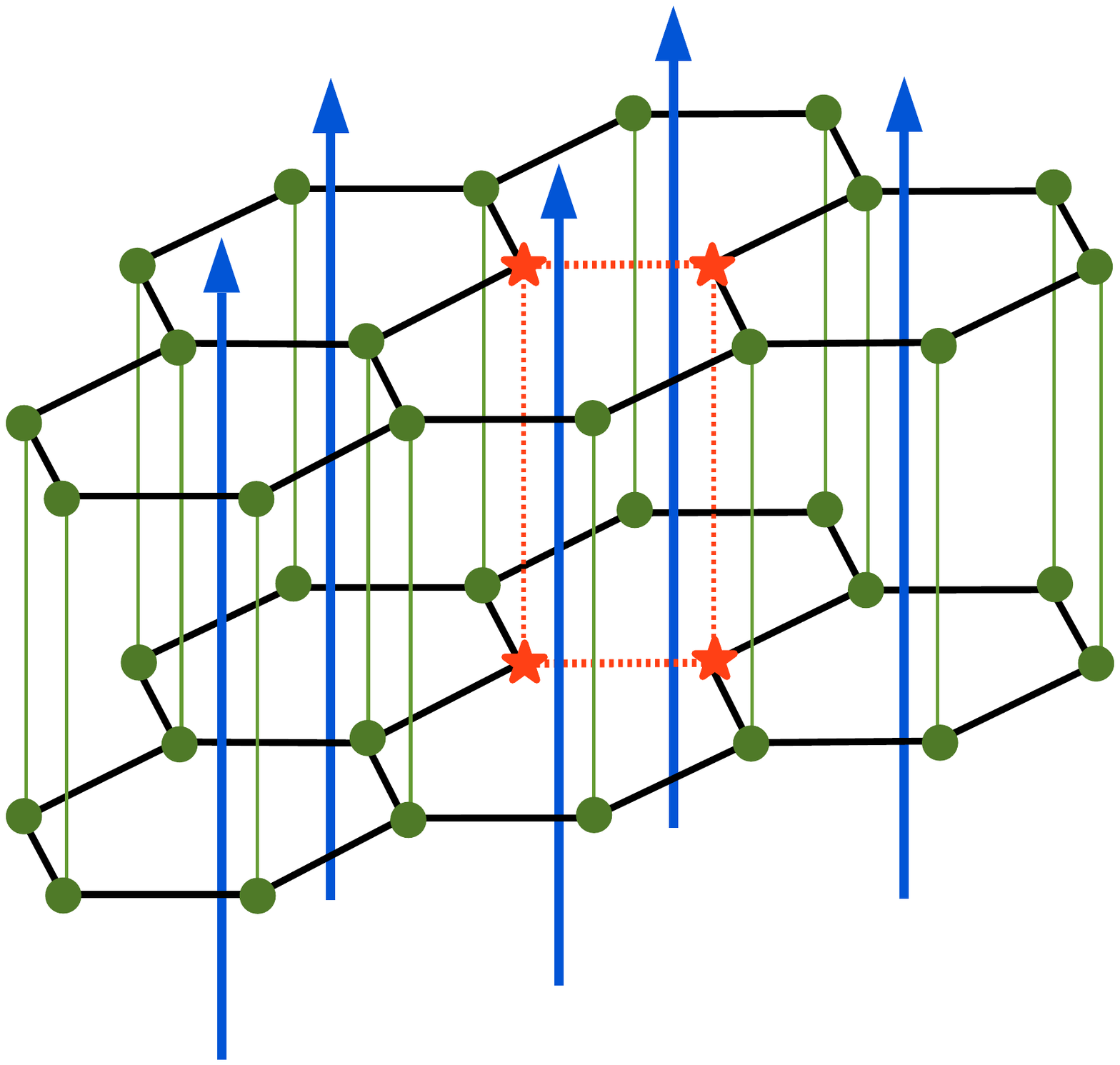}}
}
\caption{\small Left: AA-stacked graphene layers with a perpendicular magnetic field.  Right: A local defect in bilayer graphene with a perpendicular magnetic field.}
\label{fig:Double}
\end{figure}

Compatibility of the defect with the coupling, mentioned in the introduction, means that $K$ and $M$ commute, so that they have common eigenvectors $\xi^{(1)}$ and~$\xi^{(2)}$:
\begin{equation}
  K\xi^{(i)} = \kappa_i \xi^{(i)},
  \quad
  M\xi^{(i)} = \mu_i \xi^{(i)},
  \qquad i\in\{1,2\}.
\end{equation}
Thus, $H^\bi_\phi+D$ is decoupled by the decomposition of states into $H^\bi_\phi$-invariant spaces of hybrid states,
\begin{equation}
  \CC^2\otimes\Hc \;=\; \Hc^{(1)}\oplus\Hc^{(2)},
\end{equation}
with each hybrid space $\Hc^{(i)}$ being isomorphic to $\Hc$,
\begin{equation}
  \Hc^{(i)} \;=\; \left\{ \xi^{(i)}\!\otimes\!u \;\big|\;  u \in\Hc \right\}.
\end{equation}
The action of $H^\bi_\phi+D$ on $\Hc^{(i)}$ is
\begin{equation}
  (H^\bi_\phi+D)(\xi^{(i)}\!\otimes\!u) \;=\; \xi^{(i)} \otimes \left[ H_\phi +\kappa_i I + \mu_i V \right]u.
\end{equation}
Let us denote the restrictions of $H^\bi_\phi+D$ to the subspaces $\Hc^{(i)}$ by
\begin{equation}
  H^{(i)}_\phi \;=\; (H^\bi_\phi+D) |_{\Hc^{(i)}} \;=\; H_\phi +\kappa_i I + \mu_i V,
\end{equation}
in which the second equality incurs a slight relaxation of notation, as we omit to write the prefactor~``$\xi^{(i)}\otimes$".

Note that, when $K=a I + b\sigma_1$ in the Pauli matrix notation, the eigenvectors are $\xi^{(1)}\!=\!\langle 1,1 \rangle$ and $\xi^{(2)}\!=\!\langle 1,-1 \rangle$, and the hybrid states are the even and odd ones with respect to reflection about the parallel plane between the two layers.

\smallskip
The hybrid-state operators $H^{(i)}_\phi$ are spectrally shifted versions of the single-layer defective operator $H_\phi+V$, with relative energy shift $\kappa_2\!-\!\kappa_1$ and proportional defects $V\!=\!\mu_i D$.
Because the defect contributes a compact perturbation of the underlying periodic operator, the continuous spectrum remains unchanged, and we have
\begin{equation}\label{spectrum}
  \sigma_c(H^{(1)}_\phi)=\sigma_c(H_\phi)+\kappa_1,
  \qquad
  \sigma_c(H^{(2)}_\phi)=\sigma_c(H_\phi)+\kappa_2.
\end{equation}
The spectrum of $H^\bi_\phi+D$ is the union of the spectra of $H^{(1)}_\phi$ and $H^{(2)}_\phi$.  Because of (\ref{spectrum}), each of these includes a shifted copy of the continuous spectrum of the non-defective single-layer operator $H_\phi$, as depicted in Fig.~\ref{fig:Example}.  We are interested in energies outside the spectrum of one of the operators, say~$H^{(1)}_\phi$, but within the region of possible spectrum (depending on $\phi$) of the other operator~$H^{(2)}_\phi$.  Assuming $0<\kappa_2-\kappa_1<6$, this includes the interval
\begin{equation}
  E \,\in\, [\kappa_1,\kappa_2] + 3.
\end{equation}
One can apply Theorem~\ref{thm:one} to $H^{(1)}_\phi$, choosing $E$ to lie in the continuous spectrum of $H^{(2)}_{\phi}$ to obtain a bound state of $H^\bi_\phi+D$ within its continuous spectrum.
 Theorem~\ref{thm:two} states that, for any magnetic field strength $\phi$ and any energy $E$ outside the interval $[-3,3]+\kappa_1$, which contains $\sigma_c(H^{(1)}_\phi)$, a defect operator $D$ can be constructed so that the defective AA-stacked bilayer graphene model $H^\bi_\phi+D$ admits an exponentially decaying bound state at $E$, and that the energy and bound state vary analytically with perturbations of $\phi$.  {\em The point is that the eigenvalue can be chosen to lie within the continuous spectrum of $H^{(2)}_\phi$ and therefore also within the continuous spectrum of~$H^\bi_\phi+D$.}

\begin{theorem}\label{thm:two}
Let $K$ and $M$ be commuting Hermitian $2\times2$ matrices with common eigenvectors $\xi^{(i)}$ ($i\in\{1,2\}$); and let $K$ have corresponding eigenvalues $\kappa_i$ ($i\in\{1,2\}$) and let $M$ have corresponding eigenvalues $\mu_i$ ($i\in\{1,2\}$) with $\mu_1\not=0$.
Let $\phi^0\in\RR$ and $E^0\in\RR$ with $E^0\not\in[-3,3]+\kappa_1$ be given, and let $v$ and $w$ be adjacent vertices of the single-layer graphene structure.

There exists an intralayer defect operator $V=PVP$ localized at the two-vertex set $\{v,w\}$; and real-analytic functions $\phi\mapsto E_\phi\in\RR$ and $\phi\mapsto\underline u_\phi\in\CC^2\otimes\Hc$ defined for $\phi$ in an open interval containing~$\phi^0$, with $E_{\phi^0}=E^0$; and numbers $\gamma>0$ and $C>0$ such that
\begin{equation}\label{solution}
  (H^\bi_\phi+D)\underline u_\phi\;=\;E_\phi\underline u_\phi
  \qquad\text{and}\qquad
  |\underline u_\phi(n)|<Ce^{-\gamma|n|},
\end{equation}
in which $H^\bi_\phi = I\otimes H_\phi \,+\, K\otimes I$ and $D=M\otimes V$ and $H_\phi$ is defined by~(\ref{H}).

If $\kappa_1\not=\kappa_2$, then $\phi^0$ and $E^0$ can be chosen such that $E^0$ is contained in the continuous spectrum of $H^\bi_{\phi^0}+D$.
\end{theorem}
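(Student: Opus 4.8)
The plan is to reduce everything to Theorem~\ref{thm:one} by working inside the invariant hybrid space $\Hc^{(1)}$, exploiting the decoupling $\CC^2\otimes\Hc = \Hc^{(1)}\oplus\Hc^{(2)}$ already established from the commuting of $K$ and $M$. On that subspace the defective bilayer operator acts as the single-layer operator $H^{(1)}_\phi = H_\phi + \kappa_1 I + \mu_1 V$, so an eigenvalue of $H^\bi_\phi + D$ at energy $E$ supported in $\Hc^{(1)}$ is nothing but an eigenvalue of $H_\phi + \mu_1 V$ at the shifted energy $E - \kappa_1$.

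First I would translate the target energy. Since $E^0 \notin [-3,3]+\kappa_1$, the shifted energy $E^0 - \kappa_1$ satisfies $|E^0 - \kappa_1| > 3$, so Theorem~\ref{thm:one} applies to the single layer $H_\phi$ with this target energy and the given vertices $v,w$ and flux $\phi^0$. It produces a single-layer local defect $\tilde V = P\tilde V P$ at $\{v,w\}$, a real-analytic energy $\mathcal E_\phi$ with $\mathcal E_{\phi^0} = E^0 - \kappa_1$, a real-analytic state $u_\phi$, and constants $\gamma, C$ with $(H_\phi + \tilde V)u_\phi = \mathcal E_\phi u_\phi$ and $|u_\phi(n)| < Ce^{-\gamma|n|}$. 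Because $\mu_1 \neq 0$, I set $V = \mu_1^{-1}\tilde V$, again a Hermitian local defect at $\{v,w\}$, so that $\mu_1 V = \tilde V$ and hence $H^{(1)}_\phi u_\phi = (\mathcal E_\phi + \kappa_1)u_\phi$. Taking $\underline u_\phi = \xi^{(1)} \otimes u_\phi$, $E_\phi = \mathcal E_\phi + \kappa_1$, and $D = M\otimes V$ then gives $(H^\bi_\phi + D)\underline u_\phi = E_\phi \underline u_\phi$ with $E_{\phi^0} = E^0$, the analyticity passing through the fixed tensor factor $\xi^{(1)}$ and the decay being inherited from $u_\phi$ up to the constant $|\xi^{(1)}|$. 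This settles the existence, analyticity, and decay claims of (\ref{solution}).

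For the embedding claim I am free to choose $\phi^0$ and $E^0$, so I would take $\phi^0 = 0$. At zero flux $\hat H_0$ is multiplication by the matrix with off-diagonal entries $1 + z_1 + z_2$ and its conjugate, whose eigenvalues $\pm|1+z_1+z_2|$ sweep out the full band, giving $\sigma_c(H_0) = [-3,3]$; by (\ref{spectrum}) this yields $\sigma_c(H^{(i)}_0) = [-3,3]+\kappa_i$. Since $\kappa_1 \neq \kappa_2$, the two intervals $[-3,3]+\kappa_1$ and $[-3,3]+\kappa_2$ have equal length and distinct centers, so neither contains the other; I pick $E^0$ in the interior of $([-3,3]+\kappa_2)\setminus([-3,3]+\kappa_1)$. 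This $E^0$ satisfies the hypothesis $E^0 \notin [-3,3]+\kappa_1$ needed above, while simultaneously $E^0 \in \sigma_c(H^{(2)}_0) \subseteq \sigma_c(H^\bi_0 + D)$. Running the construction of the previous paragraph then yields a genuine eigenvector $\xi^{(1)} \otimes u_{\phi^0}$ of $H^\bi_0 + D$ at the energy $E^0$, which lies in the continuous spectrum.

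I do not expect a serious obstacle: the conceptual work is the decoupling into invariant hybrid spaces, which is already in hand. The two points requiring care are the rescaling $V = \mu_1^{-1}\tilde V$, which is where the hypothesis $\mu_1 \neq 0$ enters and which keeps a single defect $V$ driving both blocks, and the verification that the eigenvalue is \emph{genuinely} embedded rather than a resonance. The latter is exactly what the invariance of $\Hc^{(2)}$ guarantees: the continuous spectrum $\sigma_c(H^{(2)}_0)$ comes from an orthogonal, dynamically decoupled block, so the bound state in $\Hc^{(1)}$ coexists with it without hybridizing---the mechanism that makes the embedded eigenvalue, and ultimately its stability in $\phi$, possible.
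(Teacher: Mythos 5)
Your proposal is correct and follows essentially the same route as the paper's proof: reduce to the invariant hybrid space $\Hc^{(1)}$, apply Theorem~\ref{thm:one} at the shifted energy $E^0-\kappa_1$, rescale the resulting defect by $\mu_1^{-1}$, tensor with $\xi^{(1)}$, and for the embedding take $\phi^0=0$ and choose $E^0$ in $([-3,3]+\kappa_2)\setminus([-3,3]+\kappa_1)$. The only difference is cosmetic: you justify $\sigma_c(H_0)=[-3,3]$ by the explicit symbol $\pm|1+z_1+z_2|$, which the paper simply asserts.
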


\begin{proof}
  Let $2\times2$ matrices $K$ and $M$, numbers $\phi\in\RR$ and $E^0\in\RR$, and adjacent vertices $v$ and $w$ of the single graphene layer be given as in the hypotheses of the theorem.  Since $|E^0-\kappa_1|>3$, Theorem~\ref{thm:one} provides a defect operator $W$ supported on $\{v,w\}$, an analytic relation $\tilde E_\phi$ and states $u_\phi$, for $\phi$ in a neighborhood of $\phi^0$, such that $\tilde E_{\phi^0}=E^0-\kappa_1$ and
\begin{equation}\label{one}
  ( H_\phi + W )u_\phi \;=\; \tilde E_\phi\, u_\phi.
\end{equation}
Set $E_\phi = \tilde E_\phi+\kappa_1$ and $V=W/\mu_1$.  Then $E_{\phi^0}=E^0$, and (\ref{one}) implies
$[ H_\phi + \kappa_1 I + \mu_1 V ] u_\phi = E_\phi\,u_\phi$, that~is,
\begin{equation}\label{two}
  H^{(1)}_\phi u_\phi \;=\; E_\phi\,u_\phi,
\end{equation}
and by putting $\underline u_\phi = \xi^{(1)}\otimes u_\phi$, one obtains
\begin{equation}
   (H^\bi_\phi+D)\underline u_\phi\;=\;E_\phi\underline u_\phi
\end{equation}
The exponential bound is also provided by Theorem~\ref{thm:one}.

When $\kappa_1\not=\kappa_2$ and $\phi=0$ one has $\sigma_{c}(H^{(1)}_\phi)=[-3,3]+\kappa_1$ and $\sigma_{c}(H^{(2)}_\phi)=[-3,3]+\kappa_2$, so $E^0$ can be chosen such that $E^0\not\in[-3,3]+\kappa_1$ but $E^0\in\sigma_{c}(H^{(2)}_\phi)$.  The last statement of the theorem then follows, considering that $\sigma_c(H^\bi_\phi+D)=\sigma_{c}(H^{(1)}_\phi)\cup\sigma_{c}(H^{(2)}_\phi)$.
\end{proof}

\begin{figure}[h]
\centerline{
\scalebox{0.46}{\includegraphics{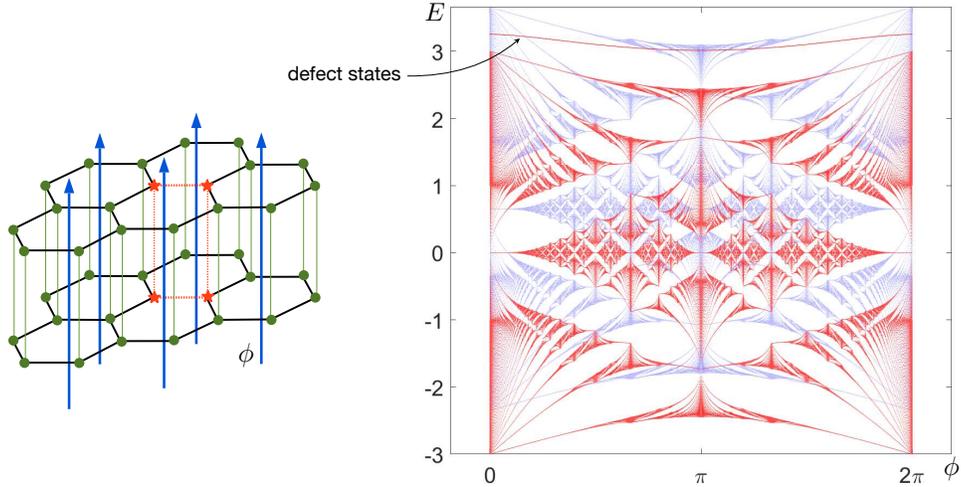}}
}
\caption{\small The continuous spectrum of a tight-binging model for bilayer AA-stacked graphene with a magnetic field exhibits two copies of the Hofstadter butterfly, which depicts energy $E$ {\itshape vs.\!} magnetic field strength~$\phi$ (right).  The two copies correspond to two decoupled spaces of layer-hybrid states, $\Hc^{(1)}$ and $\Hc^{(2)}$, with spectral shift $\kappa_2\!-\!\kappa_1=0.65$.  The red (lower) butterfly depicts the continuous spectrum for the space~$\Hc^{(1)}$, and the blue (upper) butterfly depicts the continuous spectrum for the~$\Hc^{(2)}$.
A defect is placed at four vertices, indicated with red stars (left).  The curve $E_\phi$ (right top red) shows the relation between $E$ and $\phi$ of a defect state in the space $\Hc^{(1)}$ of hybrid states. 
The curve passes through the continuous spectrum of the other hybrid space~$\Hc^{(2)}$.} 
\label{fig:Example}
\end{figure}

We compute numerically a relation $E\!=\!E_\phi$ in an example where the defect potential is localized at four vertices of the bilayer graphene model (Fig.~\ref{fig:Example}), so that the defect in each of the reduced operators $H_\phi^{(i)}$ is localized at two adjacent vertices, as in Theorem~\ref{thm:one}.  We compute a defect-state relation $E=E_\phi$ for the operator $H_\phi^{(1)}$, which lies outside its butterfly region $[-3,3]+\kappa_1$ of spectrum.  Then, one can independently adjust the coupling $K$ such that $\kappa_1$ and $\kappa_2$ will shift the butterfly region $[-3,3]+\kappa_2$ of $H_\phi^{(2)}$ so that it overlaps the defect-state relation.  Figure~\ref{fig:Example} shows how the energy $E$ of the defect state varies periodically with the magnetic field~$\phi$.

We wish to mention an interesting phenomenon related to defect states in the continuum.  In the proof of Theorem~\ref{thm:one}, creating a defect state started with the equation $(H_\phi-E)u=f$ with $f\in\Hc$, and $u$ being exponentially decaying.  In the context of Theorem~\ref{thm:two}, one obtains the equation
\begin{equation*}
  (I\otimes H_\phi \,+\, K\otimes I - E)\underline u \;=\; \underline f.
\end{equation*}
This equation can hold for energies $E$ within the region of continuous spectrum of 
the non-defective bi-layer graphene model $I\otimes H_\phi \,+\, K\otimes I$.  This means that a localized forcing $\underline f$ results in a response $\underline u$ that is exponentially decaying even for energies within the continuum.

\section{Conclusion and Discussion}  

The aim of this work is to create a defect in a planar graphene structure that engenders a bound state which persists under the introduction of a perpendicular magnetic field, and furthermore whose energy lies within the region of continuous spectrum described by the Hofstadter butterfly.  The energy of the defect state should vary smoothly with the strength of the magnetic field and have a controlled rate of exponential decay.  Such a defect state would act stably with respect to variations in the magnetic field, unaffected by the presence of an erratically changing continuous spectrum.  Conventional wisdom says that continuous spectrum around the energy of a defect state should destroy its localization.  {\itshape In this work, we have demonstrated that a bound state can be shielded from the effects of continuous spectrum when two layers are AA-stacked.}

The analysis exploits a reduction of the bilayer Hamiltonian to invariant spaces of hybrid states, plus compatibility of the coupling and defect Hamiltonians.  When there is no gating provided by a perpendicular electric field, the two hybrid spaces consist of even and odd states with respect to reflection about the center plane between the layers.  Compatibility of the coupling and defect ensures that the hybrid spaces remain non-interacting when the defect is introduced, and this in turn allows a bound state outside the continuous spectrum of one hybrid space to coincide with but not interact with the continuous spectrum of the other hybrid space.

Analysis by reducibility of the Hamiltonian to invariant spaces is particular to AA-stacking, which is not the energy-minimizing layering.  Energy is lower in the stable AB-stacked configuration, in which one atom of a layer sits above the center of a hexagon of the other layer, where there is no atom, as discussed, for example, in \cite{DaiXiangSrolovitz2016a} and \cite{CazeauxLuskinMassatt2020a}.  Whether AB-stacked graphene admits defect states within the spectral continuum that vary stably with a magnetic field remains an open question. 

Interestingly, in the absence of a magnetic field, AB- and certain more general multi-layer periodic models admit defect states embedded in the continuous spectrum---even when there are no Hamiltonian-invariant hybrid spaces.  This is due to a deeper, more general, kind of reducibility that is algebraic in nature, namely the reducibility of the Fermi surface at all energies, which allows the construction of such states; this is described in~\cite{FisherLiShipman2021}.  The Fermi surface for periodic graph operators is an algebraic set, and it is known that its reducibility is necessary for the creation of spectrally embedded defect states~\cite{KuchmentVainberg2006} (except for certain peculiar states supported at a finite number of sites, which are particular to graph models).  The introduction of a magnetic field, however, destroys the Fermi surface of the underlying periodic graph because the magnetic potential in the Hamiltonian is generically not periodic, but quasi-periodic.

The analysis via reduction of the Hamiltonian to invariant spaces of hybrid states remains intact for general periodic tight-binding configurations of the single layer, not just hexagonal structures, when the layers are aligned.  It is also extensible to multiple aligned layers, and the number of invariant hybrid state spaces is equal to the number of layers~\cite{Shipman2014}.

It appears that multiple layers are typically necessary for defect states in the continuum.  This is not quite a rigorous mathematical statement; and deciding what counts as a single layer is not so well defined, although it is clear in specific situations, such as graphene.  In the absence of a magnetic potential, one can make a few rigorous statements.  As mentioned above, the reducibility of the Fermi surface is needed for a local defect to produce a spectrally embedded bound state.  This means that at a given energy $E$, the dispersion function $D(z_1,z_2,E)$ can be factored as a Laurent polynomial in $(z_1,z_2)$, where $D(z_1,z_2,E)$ is the function whose zero set gives momentum-energy pairs for which the periodic operator admits a Floquet-Bloch mode.  The Fermi surface for single-layer graphene is irreducible.  It is also known to be irreducible for the discrete (tight-binding) Laplace operator plus a periodic potential~\cite{GiesekerKnorrerTrubowitz1993,FillmanLiuMatos2022} and for planar tight-binding models with two vertices per period \cite{LiShipman2020}, and all these models certainly count as a single layer.  Additionally, known models with reducible Fermi surface involve multiple layers attached in particular ways~\cite{FisherLiShipman2021}, and the layers are used deliberately to construct the irreducible components of the Fermi surface.  Construction of spectrally embedded defect states  is then reduced to algebra through the Fourier transform, and this works for very general stacking of graphene.  As mentioned above, because a magnetic field destroys the periodicity of the Hamiltonian, the algebraic Fermi surface and Fourier analysis is no longer directly applicable, and thus the investigation of the existence of defect states stable under a varying magnetic field, particularly for AB-stacked graphene, will require new techniques.

\vspace{5ex}

\noindent{\bfseries Acknowledgement.}
This work is dedicated to the memory of Prof. Hermann Flaschka.
The material is based upon work supported by the National Science Foundation under Grant No.\,DMS-2206037. 
J.\,Villalobos thanks the University of Costa Rica for its support of his studies at Louisiana State University.


\end{document}